\begin{document}
\def\bG{\textbf{\textit{\textsf{G}}}}
\def\calv{\mathcal{V}}
\def\Pra{\textbf{Pr}_a}
\def\Prd{\textbf{Pr}_d}
\def\EP{\textbf{ExP}}
\def\N{\textsf{\textbf{N}}}
\def\coSC{\textsf{coSC}}
\def\SC{\textsf{SC}}
\def\SCALL{\textsf{SC\_ALL}}
\def\Va{V_a}
\def\Vns{V_{ns}}
\def\Vd{V_d}
\def\atk{\textsf{atk}}
\def\ATK{\textsf{ATK}}
\def\DEF{\textsf{DEF}}
\def\Vsrc{V_{src}}
\def\Vld{V_{ld}}
\def\disc{\textsf{disc}_{\Vld,\Vsrc}}
\def\near{\mathcal{N}}
\def\Vsing{V_{sing}}

\def\cij{c_{ij}}
\def\F{\textbf{\textsf{F}}}
\def\Fst{\F^{*}}
\def\kd{k_d}
\def\ka{k_a}
\def\sp{\textit{load}_{G}}
\def\spZ{\textit{load}_{G_0}}
\def\spGp{\textit{load}_{G'}}

\def\spVE{\textit{load}_{(V,E)}}

\def\vgen{V_P^{(gen)}}
\def\vld{V_P^{(ld)}}
\def\vjct{V_P^{(jct)}}
\def\spP{sp^{\vgen,\vld}}
\def\cpc{cap^{(pwr)}}
\def\gpNs{G_P^{(ns)}}
\def\failP{fail^{(pwr)}}
\def\failR{fail^{(rtr)}}
\def\removeR{rem}
\def\remove{rem}
\def\S{\textbf{S}}
\def\T{\textbf{T}}

\def\vbrdr{V_R^{(brdr)}}
\def\vtgt{V_R^{(tgt)}}
\def\vsav{V_R^{(sav)}}

\def\cpcR{cap^{(rtr)}}
\def\spR{sp}
\def\src{src}
\def\inactive{inactive_{V_R}}

\def\Spwr{\textbf{S}^{(pwr)}}
\def\Tpwr{\textbf{T}^{(pwr)}}
\newtheorem{proposition}{Proposition}[section]
\newtheorem{fact}{Fact}[section]
\newtheorem{definition}{Definition}[section]

\title{Power Grid Defense Against Malicious Cascading Failure}

\numberofauthors{3}

\author{
Paulo Shakarian \\       
\affaddr{Dept. EECS and}\\
\affaddr{Network Science Center}\\
       \affaddr{U.S. Military Academy}\\
       \affaddr{West Point, NY, 10996}\\
       \email{paulo[at]shakarian.net}
\alignauthor
Hansheng Lei \\
\affaddr{Dept. EECS and}\\
\affaddr{Network Science Center}\\
       \affaddr{U.S. Military Academy}\\
       \affaddr{West Point, NY, 10996}\\
       \email{hansheng.lei[at]usma.edu}
\alignauthor 
Roy Lindelauf \\
       \affaddr{Netherlands Defence Academy}\\
       \affaddr{Faculty of Military Science}\\
       \affaddr{Military Operational Art and Science}\\
       \email{rha.lindelauf.01[at]nlda.nl}
}

\maketitle

\begin{abstract}
An adversary looking to disrupt a power grid may look to target certain substations and sources of power generation to initiate a cascading failure that maximizes the number of customers without electricity.  This is particularly an important concern when the enemy has the capability to launch cyber-attacks as practical concerns (i.e. avoiding disruption of service, presence of legacy systems, etc.) may hinder security.  Hence, a defender can harden the security posture at certain power stations but may lack the time and resources to do this for the entire power grid.  We model a power grid as a graph and introduce the cascading failure game in which both the defender and attacker choose a subset of power stations such as to minimize (maximize) the number of consumers having access to producers of power. We formalize problems for identifying both mixed and deterministic strategies for both players, prove complexity results under a variety of different scenarios, identify tractable cases, and develop algorithms for these problems.  We also perform an experimental evaluation of the model and game on a real-world power grid network.  Empirically, we noted that the game favors the attacker as he benefits more from increased resources than the defender.  Further, the minimax defense produces roughly the same expected payoff as an easy-to-compute deterministic load based (DLB) defense when played against a minimax attack strategy.  However, DLB performs more poorly than minimax defense when faced with the attacker's best response to DLB.  This is likely due to the presence of low-load yet high-payoff nodes, which we also found in our empirical analysis.
\end{abstract}

\category{I.2.11}{Artificial Intelligence}{Distributed Artificial Intelligence}

\terms{Algorithms} {Security}

\keywords{power grid defense, game theory, complex networks}
\section{Introduction}
Rapid cascading failure in a power grid caused by a succession of overloading lines can lead to very large outages, as observed in the United States in 2003~\cite{ohFailure}.  Studies on cascading failure~\cite{buld10,crucitti04,motter02} have illustrated that such a failure can be initiated with only a small number of initial node failures.  Further, power grid infrastructure is often particularly vulnerable with respect to cyber-security due to a variety of issues, including the use of legacy and proprietary computer hardware and software~\cite{wei}.

In this paper, we extend the work on cascading failure models to a two-player game where an attacker attempts to create a cascade that maximizes the number of customers without power while the defender defends key nodes to avoid a major outage.  In Section~\ref{trSec}, we introduce an extension to the failure model of \cite{crucitti04} to not only consider the attacker and defender, but also the different types of nodes in the power grid (i.e. power generation vs. power consumers).  In Section~\ref{ccSec}, we explore the computational complexity of finding deterministic best-response strategies for the attacker and defender under several different scenarios depending on the relative number of resources each player has and whether the opponent has a deterministic or mixed strategy.  Here we found that, in general, these problems are NP-hard, though we do identify some tractable cases.  In Section~\ref{analySec}, we explore heuristic algorithms for finding determinsitic ``best responses'' as well as minimax mixed strategies.  We introduce a ``high-load'' strategy for defense (based on the observations of~\cite{crucitti04}), greedy heuristics for deterministic strategies, and a double-oracle approach based on \cite{mcmahan03} for finding a mixed strategy.  In Section~\ref{expSec} we perform experiments on a real-world dataset of a power grid~\cite{rosato08} and find that this game seems to favor the attacker as he benefits more from increased resources than the defender.  Further, our experiments revealed that the minimax defense produces roughly the same expected payoff as an easy-to-compute deterministic load based (DLB) defense when played against a minimax attack strategy, though the load based defense does more poorly than minimax when faced with the attacker's best response to DLB.  This is likely due to the presence of low-load yet high-payoff nodes, which we also found in our empirical analysis of the model.  Finally, related work is discussed in Section~\ref{rwSec}.

\pagebreak
\section{Technical Preliminaries}
\label{trSec}
Consider a power-grid network modeled as an undirected graph $G=(V,E)$.  Let $\Vsrc, \Vld \subseteq V$ be source (producers of power) and load (consumers of power) on the network.  We shall use the notation $\disc(G)$ to denote the number of nodes in $\Vld$ which are not connected to any node in $\Vsrc$ in graph $G$.  Let $\bG$ be the set of all subgraphs of $G$.  For a given node $i$, let $\near_G(i)$ be the set of nodes in $\Vsrc-\{i\}$ that are closest to that node (based on path length in $G$).  From this, we define edge load (similar to the idea of edge betweenness~\cite{wasserman1994social}).

\begin{definition}[Edge Load]
Given edge $ij\in E$, the \textbf{edge load}, $\sp(ij)$ is defined as follows:
\begin{equation*}
\sp(ij) = \sum_{t \in \Vld}\sum_{s \in \near_G(t)}\frac{\sigma_G(s,t | ij)}{|\near_G(t)|\,\sigma_G(s,t)},
\end{equation*}
where $\sigma_G(s,t)$ is the number of shortest paths between $s,t \in V$ and $\sigma_G(s,t | ij)$ is the subset of these paths that pass through edge $ij\in E$. 
\end{definition}

Starting from initial network $G_0=(V_0,E_0)$ we use $c_{ij}$ to denote the capacity edge $ij\in E_0$.  In a real-world setting, we would expect to have this information.  However, in this paper, we use the following proxy (similar to \cite{crucitti04}).
\begin{equation*}
c_{ij}(G_0)=(1+\alpha){\spZ}(ij) 
\end{equation*}
where $\alpha$ is a non-negative real that specifies the excess capacity available on that line. We shall refer to $\alpha$ as the \textit{capacity margin}.  We assume that an edge $ij \in E$ fails in $G=(V,E)$, with $E \subset E_0$, if $\sp(ij) > c_{ij}(G_0)$. Once nodes (and adjacent edges) in $V_0$ are removed from $G_0$, this results in a change of shortest paths between sources and loads, hence more edges will potentially fail. This cascading power failure is modeled by a ``failure'' operator denoted with $\F$ (based on the failure model of \cite{crucitti04} - though we note that our model is a new contribution due to the consideration of source and load nodes) that maps networks to networks.  We define it as follows.

\begin{definition}[Failure Operator]
The \textbf{failure operator}, $\F: \bG \rightarrow \bG$, is defined as follows:
\begin{equation*}
\F((V,E)) = (V, \{ij \in E | \spVE(ij) \leq \cij(G_0) \})
\end{equation*}
\end{definition}

Intuitively, one application of the failure operator removes all edges that have exceeded their maximum capacity.  We can define multiple applications of this operator as follows:

\begin{equation*}
 \F^{i}(G) = \begin{cases} G & \text{if $i=0$}  \\ \F(\F^{i-1}(G)) & \text{otherwise} \end{cases} 
\end{equation*}

Clearly, there must exist a fixed point that is reached in no more than $|E|+1$ applications of $\F$.  Hence, we shall use the following notation:

\begin{equation*}
 \Fst(G) = \F^{i}(G) \textit{ s.t. } \F^{i}(G)=\F^{i+1}(G)
\end{equation*}

We now consider two agents: an attacker  and a defender.  The attacker's strategy is to destroy nodes (and their adjacent edges) in an effort to cause a cascading failure that maximizes the number of load nodes ($\Vld$) that are disconnected from all source nodes ($\Vsrc$).  Meanwhile, the defender's strategy is to harden certain nodes such that the attacker is unable to destroy them - though these nodes can be taken offline as a result of the cascading failure\footnote{Note that this would likely be the case where the attack and defense occurs in cyber-space, while the cascade occurs in the physical world.}.  The attacker can destroy $\ka$ nodes while the defender can harden $\kd$ nodes. Thus the strategy space of both the attacker and defender consists of all subsets $\Va, \Vd \subseteq V$ of size $|\Va| \leq \ka$ ($|\Vd|\leq\kd$ respectively). We denote these strategy spaces by $ATK$ ($DEF$ respectively), i.e., if we allow the attacker to consider all strategies of size $\ka$ or less we have:
\begin{equation*}
ATK = \{S \in 2^V : |S| \leq k_a \} 
\end{equation*}

We now have all of the components to define the payoff function.

\begin{definition}[Payoff Function]
Given initial network $G=(V,E)$ with edge capacities $c_{ij}(G)$, attack (defend) strategy $\Va (\Vd)$, the payoff function is defined by
\begin{equation*}
p_G(\Va,\Vd)=\disc(\Fst((V-(\Va-\Vd),E)).
\end{equation*}
\end{definition}

Now, in reality, the defender will have real-world limitations on the number of nodes (i.e. substations) he may harden.  For instance, with regard to smart grid defense, applying the most up-to-date patches on all systems may not be realistic as it could potentially require system down-time - affecting customer service.  Further, it would also likely not make sense for the defender to only harden certain nodes and ignore others.  Hence, it is reasonable to consider a situation where the defender can only harden certain nodes against attack (and may do so probabilistically - i.e. applying hardware or software updates according to a schedule).  Therefore, we study mixed strategies.  Such strategies will be specified by probability distributions  $\Pra,\Prd$  for the attacker and defender respectively.  We shall denote the number of strategies assigned a non-zero probability as $|\Pra|,|\Prd|$. We can define expected payoff as follows.

\begin{definition}[Expected Payoff]
Let $\Pra,\Prd$ be probability distributions over all subsets of $V$ of sizes $\ka$ (resp. $\kd$) or less.  These probability distributions correspond to a mixed strategy for the attacker and defender respectively.  Hence, given such probability distributions, the expected payoff can be computed as follows:
\begin{equation*}
\EP(\Pra,\Prd)=\sum_{\Va \in 2^V}\Pra(\Va)\sum_{\Vd \in 2^V}\Prd(\Vd)p_G(\Va,\Vd)
\end{equation*}
\end{definition}

In this work our goal is to find the \textit{minimax} strategy for the defender - that is the mixed strategy for the defender that minimizes the attacker's maximum expected payoff - as well as deterministic ``best responses'' for both players given the other's strategy.
\section{Computational Complexity}
\label{ccSec}
\label{negRes}
In this section, we analyze the computational complexity of determining the best response for each of the agents to a strategy of its opponent.  First, we shall discuss the case for finding a deterministic strategy for the defender and attacker.  Then we shall explore the computational complexity of finding a mixed strategy.  We summarize our complexity results in Table 3.

\begin{small}
\begin{table}
\begin{tabular}{|c|c|c|}
\hline
Opponent Strategy & Attacker & Defender \\
\hline
\hline
Mixed w. $1$ resource & NP-Compl. & PTIME \\
&Thm.~\ref{atkBrNph}&Prop.~\ref{ptime1}\\\hline
Det. w. fewer resources &NP-Compl. & PTIME \\
&Thm.~\ref{atkBrNph}&Prop.~\ref{detFewerAtk}\\\hline
Det. w. greater resources &NP-Compl. & NP-Compl.\\
&Thm.~\ref{atkBrNph}&Thm.~\ref{mainNph}\\\hline
Mixed w. fewer resources &NP-Compl. & NP-Compl.\\
&Thm.~\ref{atkBrNph}&Thm.~\ref{lessNpc}\\\hline
Mixed w. greater resources &NP-Compl. & NP-Compl.\\
&Thm.~\ref{atkBrNph}&Thm.~\ref{mainNph}\\\hline
\end{tabular}
\label{complTable}
\caption{Complexity Results for Finding a Deterministic Best Response}
\end{table}
\end{small}

We frame the formal combinatorial problem of finding the best-response for the defender as follows:\\

\noindent\textbf{Grid-Defend Deterministic Best Response (GD-DBR)}\\
\noindent INPUT: Network $G=(V,E)$, attacker mixed strategy $\Pra$ (where each option is of size no greater than $\ka$), natural number $\kd$, real numbers $X,\alpha$\\
\noindent OUTPUT: ``Yes'' if there exists a set $\Vd \subseteq V$ s.t. $|\Vd| \leq \kd$ and $\sum_{\Va \in ATK}\Pra(\Va)p_G(\Va,\Vd) \leq X$ and ``no'' otherwise.\\

We shall study this case under several conditions.  The first, and easiest case is when $\Pra = 1$ (the attacker uses a deterministic strategy) and $\ka \leq \kd$.

\begin{proposition}
\label{detFewerAtk}
When $\ka \leq \kd$ and $|\Pra| = 1$ then GD-DBR is solvable in polynomial time.
\end{proposition}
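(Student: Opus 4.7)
The plan is to observe that when the defender has at least as many resources as the attacker and knows the attacker's deterministic choice $\Va$, the defender can simply shadow the attack. Concretely, since $|\Va| \leq \ka \leq \kd$, the defender is allowed to pick $\Vd$ to be any superset of $\Va$ of size at most $\kd$; the simplest choice is $\Vd = \Va$. With this choice $\Va - \Vd = \emptyset$, so the network handed to $\Fst$ is the original $(V,E)$ and the resulting payoff is $\disc(\Fst((V,E)))$, which is a constant independent of the defender's choice among such superset strategies.

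Next I would argue this constant is in fact a lower bound on the payoff over all legal $\Vd$. For any $\Vd' \subseteq V$ with $|\Vd'| \le \kd$, the removed set $\Va - \Vd'$ is a (possibly empty) subset of $V$, and removing any such subset can only disconnect additional load nodes from sources and can only push edge loads upward on shortest paths that are rerouted, triggering further cascades. Thus $\disc(\Fst((V-(\Va-\Vd'),E))) \ge \disc(\Fst((V,E)))$, so $\Vd = \Va$ is optimal and attains the minimum possible payoff $\disc(\Fst((V,E)))$.

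The algorithm therefore reduces to a single computation: compute $\Fst((V,E))$ and check whether $\disc(\Fst((V,E))) \le X$; answer ``yes'' if and only if the inequality holds. This is polynomial because (i) $\Fst$ is reached after at most $|E|+1$ applications of $\F$ by the fixed-point remark preceding the statement, (ii) each application of $\F$ requires computing $\spVE(ij)$ for every surviving edge, which reduces to all-pairs shortest paths together with an enumeration over $\Vld$ and $\near_G(t)$, and (iii) $\disc$ is computable by a single connectivity pass.

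The only subtlety, and the part I would write most carefully, is the monotonicity argument in the second paragraph: one must justify that removing a strict subset of $\Va$ from $V$ before invoking $\Fst$ never decreases $\disc$. I expect this to be the main (and only) point requiring care, since the cascading operator $\F$ is nonlinear in the network. The argument is that both (a) deleting vertices only destroys connectivity between $\Vld$ and $\Vsrc$ (it can never create new connections), and (b) the initial set of disconnected loads is a subset of $\disc(\Fst(\cdot))$ after the cascade, since $\F$ only removes edges. Combining (a) and (b) gives the needed inequality and closes the proof.
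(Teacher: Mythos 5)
Your proof is correct and takes essentially the same route as the paper's one-line proof: since $|\Va|\le\ka\le\kd$ the defender plays $\Vd=\Va$, nothing is removed, and because $c_{ij}=(1+\alpha)\spZ(ij)\ge\spZ(ij)$ no cascade occurs on the intact network, so the payoff $\disc((V,E))$ is a global lower bound by your connectivity argument (a)--(b). One small caution: your intermediate claim that removing nodes ``can only push edge loads upward'' is false in general (deleting a load node removes its contribution to every edge load and can therefore lower loads), but it is not needed here --- the purely graph-theoretic argument in your final paragraph already closes the proof.
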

\begin{proof}
As the attacker plays only one strategy and the defender can defend at least as many nodes as are being attacked, the defender simply defends all the nodes in the attacker's strategy.
\end{proof}

However, even with $|\Pra| =1$, the problem becomes NP-hard in the case where $\ka > \kd$.

\newtheorem{theorem}{Theorem}
\begin{theorem}
\label{mainNph}
When $\ka > \kd$ then GD-DBR is NP-complete, even when $|\Pra|=1$ and $X$ is an integer.
\end{theorem}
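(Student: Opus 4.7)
The approach is to show NP-membership by standard guess-and-verify and then to reduce from \textsc{Set Cover}. Membership is immediate: the defender's set $\Vd$ is a polynomial-size certificate; given it, one evaluates $\Fst((V-(\Va-\Vd),E))$ by iterating $\F$ at most $|E|+1$ times (each step reduces to all-pairs shortest paths plus a betweenness-style aggregation, so is polynomial), counts disconnected loads, and checks $\disc\leq X$. With $|\Pra|=1$ the expected-payoff computation collapses to this single evaluation, so the whole verification is polynomial.

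For hardness I reduce from \textsc{Set Cover}: given a universe $\{u_1,\ldots,u_n\}$, sets $S_1,\ldots,S_m$, and target $k<m$, I build a graph with one source $s$, a ``set'' node $v_j$ and edge $sv_j$ for every $j$, and a load node $w_i$ together with edge $v_jw_i$ whenever $u_i\in S_j$. The attacker's deterministic strategy is $\Va=\{v_1,\ldots,v_m\}$, so $\ka=m$; I take $\kd=k$ (so $\ka>\kd$) and $X=0$. Since defending a node outside $\Va$ has no effect on the payoff, a best-response defender picks some $\Vd\subseteq\Va$ of size $k$, which can be read as retaining $k$ of the sets $S_j$.

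The easy direction of the intended equivalence is clear: if the retained sets miss some element $u_i$, then $w_i$ is isolated from $s$ after removing $\Va-\Vd$, giving payoff at least $1>0$. The hard direction is to show that when the retained sets do cover the universe, the cascading operator $\Fst$ does no additional damage. This is the main obstacle, and I plan to neutralize it by picking the capacity margin $\alpha$ large but still polynomial in the input size. Concretely, after removal every edge $sv_j$ has load at most $|S_j|\leq n$ and every edge $v_jw_i$ has load at most $1$, while the original loads $\spZ(ij)$ are all bounded below by $1/m$; choosing $\alpha\geq mn-1$ therefore forces $(1+\alpha)\spZ(ij)$ to dominate the new load on every surviving edge, so $\F$ acts as the identity and $\Fst$ leaves the post-attack graph untouched. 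Combined with the easy direction, payoff $0$ is achievable iff a size-$k$ set cover exists. Since the construction is polynomial-time, uses $|\Pra|=1$, and has integer $X$, the theorem follows.
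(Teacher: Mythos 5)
Your proposal is correct and follows essentially the same route as the paper: a reduction from Set Cover in which the attacker deterministically targets all set-gadget nodes, the defender retains $k$ of them, $X=0$, and a large capacity margin $\alpha$ ensures the failure operator does no further damage, so the payoff is $0$ iff the retained sets form a cover. Your gadget differs only cosmetically (a single auxiliary source feeding the set nodes, rather than making the set nodes themselves the sources), and your explicit load bounds showing that $\F$ acts as the identity on the post-attack graph are, if anything, more careful than the paper's one-line appeal to bipartiteness.
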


\begin{proof}
Clearly, checking if a given deterministic defender strategy $\Vd$ meets the requirements of the ``output'' of GD-DBR can be completed in polynomial-time, providing membership in the class NP.\\

For NP-hardness consider the known NP-hard ``set cover'' problem~\cite{Garey79} that takes as input a natural number $k$, set of elements $S = \{s_1,\ldots,s_n\}$, family of subsets of $S$, $H=\{h_1,\ldots,h_m\}$ and returns ``yes'' if there is a $k$-sized (or smaller) subset of $H$ s.t. their union is equal to $S$.  We can embed Set Cover into an instance of GD-DBR in polynomial time with the following embedding: set $\ka = |H|$, $\kd = k$, $X = 0$, $\alpha = |H|+|S|$, create $G=(V,E)$ as follows:
\begin{small}
\begin{itemize}
\item For each $h \in H$ create a node $v_h$ and for each $s \in S$ create node $v_s$
\item If $s \in h$, create edge $(v_h,v_s)$, for each $ij \in E$
\item Set $\Vsrc = \{v_h | h \in H\}$, $\Vld = \{v_s | s \in S \}$, $\Va = V-\Vld$
\end{itemize}
\end{small}

Suppose, by way of contradiction (BWOC), that there is a ``yes'' answer to Set Cover but a ``no'' answer to GD-DBR.  Consider set $H'$ a subset of $H$ that is the certificate for Set Cover and the corresponding set $V'=\{v_h | h \in H'\}$ in the instance of GD-DBR.  Suppose the defender utilizes this as a strategy.  The attacker then effectively attacks the set $V-\Vld-V'$.  Note that as the graph is bi-bipartite, this does not cause any cascading failure.  By the construction, each load node must be connected to a source node, hence the number of offline load nodes is $X$.  This gives us a contradiction.

Suppose, BWOC, that there is a ``yes'' answer to GD-DBR but a ``no'' answer to the corresponding instance of Set Cover.  Let $V'$ be the certificate for GD-DBR.  We note that any element of $\Vld \cap V'$ in $V'$ can be replaced by a neighboring node from $\Vsrc$ without changing the size of this set and that such a set would still allow for all load nodes to remain online, let $V''$ be this new set.  Consider the set $\{ h | v_h \in V'' \}$.  By the contra-positive of the claim, this cannot be a cover of all elements of $S$.  However, this would also imply that there is some element $v_s \in \Vld$ that is not connected to $V''$ meaning that it fails (as the attacker successfully destroys all its neighbors).  This means that the adversary has a payoff greater than $0$ (which is what $X$ was set to) -- hence a contradiction.
\end{proof}

Hence, the presence of a more advantageous attacker is a source of complexity.  The next question would be if the attacker's behavior, i.e. deterministic vs. non-deterministic, also affects the complexity of the problem, even if the defender has the advantage.  First, let us examine the case where the attacker has a mixed strategy with $\ka=1$.

\begin{proposition}
\label{ptime1}
When $\ka =1$ then GD-DBR is solvable in polynomial time (w.r.t. $|\Pra|$), even when $|\Pra|\geq 0$.
\end{proposition}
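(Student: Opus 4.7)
The plan is to exploit the fact that, when $\ka = 1$, each attack strategy in the support of $\Pra$ is either $\emptyset$ or a singleton $\{v\}$, so the expected payoff decomposes additively over the singleton attacks and the defender's choices at different nodes become independent. Let $U$ denote the set of nodes appearing as singletons in the support of $\Pra$, with $p_v$ the probability of attack $\{v\}$ and $p_\emptyset$ the probability of the empty attack. For any singleton $\Va = \{v\}$, we have $\Va - \Vd = \{v\}$ if $v \notin \Vd$ and $\emptyset$ otherwise, so $p_G(\Va,\Vd)$ depends only on whether $v \in \Vd$ and not on any other element of $\Vd$. In particular, including in $\Vd$ any node outside $U$ never changes the payoff, so without loss of generality $\Vd \subseteq U$.

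Next I would precompute the baseline value $b := \disc(\Fst(G))$ and, for each $v \in U$, the single-removal value $a_v := \disc(\Fst((V - \{v\}, E)))$. Because $\Fst$ reaches its fixed point in at most $|E|+1$ applications of $\F$ and each application is polynomial time in $|V|+|E|$, obtaining $b$ and all of the $a_v$ takes time polynomial in $|\Pra|$ and the size of $G$. Using the independence just noted, the expected payoff of any $\Vd \subseteq U$ satisfies
\begin{equation*}
\EP(\Pra,\Vd) \;=\; p_\emptyset\, b \;+\; \sum_{v \in U \cap \Vd} p_v\, b \;+\; \sum_{v \in U \setminus \Vd} p_v\, a_v,
\end{equation*}
so that defending $v$ reduces $\EP(\Pra,\Vd)$ by exactly $\beta_v := p_v(a_v - b)$ relative to not defending $v$, with no interaction between the $\beta_v$'s for different $v$.

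To finish, I would greedily choose $\Vd$ to consist of the nodes $v \in U$ with the largest $\beta_v$, taking at most $\kd$ of them and excluding any with $\beta_v \leq 0$; this maximizes $\sum_{v \in \Vd}\beta_v$ subject to $|\Vd| \leq \kd$ and therefore minimizes $\EP(\Pra,\Vd)$. The decision instance is then answered ``yes'' iff this minimum is at most $X$. The overall procedure is a sort of at most $|\Pra|$ numbers plus $|\Pra|+1$ cascade simulations, so it runs in time polynomial in $|\Pra|$ and in $|V|+|E|$. I do not expect any serious obstacle here: the only substantive step is the independence observation, which follows directly from the definition of $p_G$ once all attacks are of size at most one.
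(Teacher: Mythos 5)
Your proof is correct and follows essentially the same route as the paper's: decompose the expected payoff additively over the singleton attacks (since with $\ka=1$ the payoff of $\{v\}$ against $\Vd$ depends only on whether $v\in\Vd$) and then defend the $\kd$ nodes with the largest contribution. The only difference is that you carry the baseline term $\disc(\Fst(G))$ explicitly, whereas the paper implicitly treats the undamaged-network payoff as zero; this is a harmless refinement, not a different argument.
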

\begin{proof}

In this case, we can re-write the payoff function as $p_G(\{v\},\Vd) = 0$ if $v \in \Vd$ and $p_G(\{v\},\Vd) = p_G(\{v\},\emptyset)$ otherwise.  Let $V' = \cup\{ \Va \in ATK | \Pra(\Va) > 0 \}$.  Note that each element of $V'$ is also a strategy the attacker plays with a non-zero probability (as the attacker only plays singletons).  Hence, the expected payoff can be re-written as  $\sum_{v \in V'-\Vd}\Pra(\{v\})p_G(\{v\},\emptyset)$.  Therefore, the best a defender can do is defend the top $\kd$ nodes in $V'$ where\\ $\Pra(\{v\})p_G(\{v\},\emptyset)$ is the greatest - which can be easily computed in polynomial time and allows us to determine the answer to GD-DBR.
\end{proof}

However, if the defender is playing a mixed strategy with $\ka >1$, then the problem again becomes NP-complete.

\begin{theorem}
\label{lessNpc}
When $|\Pra|>1$ and $\ka >1$, GD-DBR is NP-complete, even when $\kd > \ka$ and $X$ is an integer.
\end{theorem}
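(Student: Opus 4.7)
My plan is first to establish NP membership via a standard guess-and-check, and then to prove hardness by a reduction from Vertex Cover. Membership is immediate: given a candidate $\Vd$ with $|\Vd|\le\kd$, one enumerates the (polynomially many) attacks $\Va$ in the support of $\Pra$, computes each $p_G(\Va,\Vd)$ in polynomial time, and verifies that the weighted sum is at most $X$.

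For the hardness direction I reduce from Vertex Cover, which is NP-hard even when the required cover size $k$ is at least $3$ (one can always pad the input graph with disjoint triangles, each of which forces two additional vertices into any cover). Given an instance $(G'=(V',E'),k)$ with $k\ge 3$, I construct a GD-DBR instance on the bipartite incidence graph $G=(V,E)$: introduce a source $u_v\in\Vsrc$ for each $v\in V'$, a load $v_e\in\Vld$ for each $e\in E'$, and an edge $(u_v,v_e)$ whenever $v$ is an endpoint of $e$. Set $\ka=2$, $\kd=k$, $X=0$, and $\alpha=1$, and let $\Pra$ place probability $1/|E'|$ on each pure attack $\Va^e=\{u_v,u_w\}$ indexed by $e=\{v,w\}\in E'$. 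These choices satisfy $|\Pra|=|E'|>1$, $\ka=2>1$, $\kd=k>\ka$, and $X$ integer.

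The key lemma to prove is that $p_G(\Va^e,\Vd)$ equals the indicator of $\Vd\cap\{u_v,u_w\}=\emptyset$. Two observations justify it: (a) $v_e$ is the unique load node whose full neighborhood is contained in $\{u_v,u_w\}$, since any other $v_{e'}$ shares at most one endpoint with $e$ and therefore retains a surviving source neighbor after the attack; and (b) no cascade ever fires, because every load node has degree $2$, every initial edge load equals $1/2$, and every capacity equals $(1+\alpha)/2=1$, so removing at most one source neighbor of any load node at most doubles the load on the remaining adjacent edge to exactly $1$, which is still within capacity. Granting this lemma, the expected payoff collapses to $|E'|^{-1}\cdot|\{e\in E':\Vd\text{ does not cover }e\}|$, which is at most $X=0$ exactly when $\Vd$ covers every edge of $G'$. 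A brief WLOG step---any load node in $\Vd$ can be dropped without changing $\EP$, since load nodes never appear in any $\Va^e$---then lets me restrict to $\Vd\subseteq\Vsrc$, so ``yes'' defenses correspond bijectively to vertex covers of $G'$ of size at most $k$.

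The main obstacle I expect is the load-and-capacity bookkeeping needed to rule out cascades rigorously; once that is in hand, both directions of the reduction---lifting a cover to a defense and projecting a defense back to a cover---as well as the ``drop load-node defenses'' swap, are immediate.
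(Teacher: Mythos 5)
Your proof is correct and is structurally the same reduction as the paper's: a bipartite source--load gadget for a covering problem, a uniform mixed attack that, for each element to be covered, destroys that element's entire source neighborhood, $X=0$, and $\kd$ equal to the cover bound. The only real differences are that the paper reduces from Set Cover (so $\ka$ becomes the maximum element frequency and the attack sets are $\{v_h : s\in h\}$) while you reduce from Vertex Cover (forcing $\ka=2$), and that the paper suppresses cascades by taking $\alpha=|H|+|S|$ huge whereas you take $\alpha=1$ and verify explicitly that every edge load stays at or below capacity. Two things your version buys: the triangle-padding argument actually guarantees the side condition $\kd>\ka$, which the paper's choice $\ka=\max_{s\in S}|\{h\mid s\in h\}|$, $\kd=k$ does not obviously ensure for an arbitrary Set Cover instance; and your explicit no-cascade lemma is more careful than the paper's implicit reliance on a large capacity margin. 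One small housekeeping point: you should also note (as you implicitly do) that the Vertex Cover instance may be assumed to have at least two edges so that $|\Pra|>1$ holds.
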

\begin{proof}
NP-completeness mirrors that of Theorem~\ref{mainNph}.  For NP-hardness, we again consider a reduction from set-cover (defined in the proof of Theorem~\ref{mainNph}.  The embedding can again be performed in polynomial time as follows: set $\ka = \max_{s\in S}|\{h | s \in h\}|$, set $\kd = k$, $X = 0$, $\alpha = |H|+|S|$, create $G=(V,E)$, $\Vsrc$, and $\Vld$ as per the construction in Theorem~\ref{mainNph}.  We then set up the mixed strategy as follows: for each $s \in S$, let $\Va^s = \{h | s \in h\}$ and $\Pra(\Va^s)=1/|S|$.

Suppose, BWOC, that there is a ``yes'' answer to set cover and a ``no'' answer to the instance of GD-DBR.  Consider set cover solution $H^*$ and set $\Vd = \{v_h | h \in H^*\}$.  Note that $\Vd$ meets the cardinality requirement.  Note that by the construction, a source node becomes disconnected only if all of the load nodes connected to it are attacked, hence there is some node in the set $\Vld$ that is totally disconnected under at least one attacker strategy - let $v_s$ be this node.  However, as set $H^*$ covers $S$, then regardless of the attacker strategy, there is always some node $v_h$ that is connected and never attacked (giving the attacker a payoff of zero) - hence a contradiction.

Suppose, BWOC, that there is a ``yes'' answer to GD-DBR and a ``no'' answer to the instance of set cover.  Consider GD-DBR solution $V'$.  We note that any element of $\Vld \cap V'$ in $V'$ can be replaced by a neighboring node from $\Vsrc$ without changing the size of this set and that such a set would still allow for all load nodes to remain online, let $V''$ be this new set.  Consider the set $H^* = \{h | v_h \in V''\}$.  Note that $|H^*| \leq k$.  By the contra-positive, there must be at least one element of $S$ not covered by $H^*$.  Let node $v_s$ be a node associated with uncovered element $s$.  As GD-DBR returned ``yes'' then there is no attacker strategy where $v_s$ becomes disconnected from some node in $\Vsrc$.  As attack strategy $\Va^s$ includes all nodes that are connected to $v_s$, then at least one of these nodes must be included in $V''$.  Therefore, for every node $v_s \in \Vld$ there is some node $v_h \in \Vld \cap V''$ that is connected to it, which means, by the construction, that $H^*$ must cover all elements of $S$ - a contradiction.
\end{proof}

We now frame the formal problem for finding a deterministic best-response for the attacker below.\\

\noindent\textbf{Grid-Attack Deterministic Best Response (GA-DBR)}\\
\noindent INPUT: Network $G=(V,E)$, defender mixed strategy $\Prd$ (where each option is of size no greater than $\kd$), natural number $\ka$, real numbers $X,\alpha$\\
\noindent OUTPUT: ``Yes'' if there exists a set $\Va \subseteq V$ s.t. $|\Va| \leq \ka$ and $\sum_{\Vd \in DEF}\Prd(\Vd)p_G(\Va,\Vd) \geq X$ and ``no'' otherwise.\\

In the case of $\ka=1$, this problem is solvable in polynomial time: simply consider each $v \in V$.  The attacker computes $\sum_{\Vd \in DEF}\Prd(\Vd)p_G(\{v\},\Vd)$ until one is found that causes the payoff to exceed or be equal to $X$.  However, for strategies of larger size, the problem becomes NP-hard, regardless of the size of the defender strategy.

\begin{fact}
\label{easyFact}
When $\ka=1$, GA-DBR is solvable in polynomial time (w.r.t. $|\Prd|$).
\end{fact}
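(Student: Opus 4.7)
The plan is to exploit the fact that when $\ka=1$ the attacker's strategy space $ATK$ collapses to the singletons $\{v\}$ for $v\in V$ together with the empty set, giving at most $|V|+1$ candidate strategies. Since the decision problem asks whether \emph{any} attack strategy achieves expected payoff at least $X$, it suffices to enumerate all singletons, compute the expected payoff of each, and return ``yes'' as soon as one meets the threshold (and ``no'' if none does).

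For each candidate singleton $\{v\}$, I would compute
\begin{equation*}
\EP(\{v\},\Prd) \;=\; \sum_{\Vd \in DEF}\Prd(\Vd)\,p_G(\{v\},\Vd)
\end{equation*}
by iterating over the support of $\Prd$, which has size $|\Prd|$. Each term $p_G(\{v\},\Vd)$ requires applying the failure operator $\Fst$ to the graph $(V-(\{v\}-\Vd),E)$ and counting disconnected load nodes via $\disc$. Because $\Fst$ reaches its fixed point in at most $|E|+1$ applications of $\F$, and each application of $\F$ only requires recomputing edge loads (shortest-path computations between elements of $\Vld$ and $\Vsrc$) and thresholding against the initial capacities $c_{ij}(G_0)$, every single evaluation of $p_G$ can be carried out in polynomial time in $|V|$ and $|E|$.

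Putting the pieces together yields a total running time that is polynomial in $|V|$, $|E|$, and $|\Prd|$: one outer loop over the $|V|$ singleton candidates, an inner loop of length $|\Prd|$, and a polynomial-time payoff computation inside. Comparing the best computed expected payoff against $X$ (or short-circuiting on the first candidate that meets it) decides GA-DBR.

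There is no serious obstacle here; the argument is essentially an observation. The only point that needs care is verifying that the subroutine evaluating $p_G$ is genuinely polynomial, which reduces to noting that shortest-path based edge-load computations and iterating $\F$ to its fixed point are standard polynomial operations, so the whole procedure runs in time polynomial in the description of the input including $|\Prd|$.
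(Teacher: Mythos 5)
Your argument is correct and matches the paper's reasoning exactly: enumerate the at most $|V|$ singleton attack strategies, evaluate the expected payoff of each against the support of $\Prd$, and accept if any meets the threshold $X$. The additional care you take in confirming that each evaluation of $p_G$ (iterating $\F$ to its fixed point) is polynomial is a welcome but unsurprising elaboration of the same observation.
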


\begin{theorem}
\label{atkBrNph}
GA-DBR is NP-complete.
\end{theorem}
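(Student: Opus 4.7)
The plan has two parts: NP membership and a polynomial-time reduction from Clique.

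For membership, a certificate is an attack set $\Va \subseteq V$ with $|\Va| \leq \ka$. Verification computes $\EP(\Va, \Prd) = \sum_{\Vd \in \mathrm{supp}(\Prd)} \Prd(\Vd)\, p_G(\Va, \Vd)$ and compares it against $X$. Since $\Prd$ is specified by its explicit support and each payoff requires only iterating $\F$ at most $|E|+1$ times followed by a reachability check, the whole verification is polynomial in the input size.

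For hardness, I would reduce from Clique, which remains NP-hard for $k \geq 4$ (cliques of size at most $3$ can be found in polynomial time, so restricting to this range loses nothing). Given an instance $(G_C = (V_C, E_C), k)$ of Clique, construct a bipartite GA-DBR instance: a source $v_u$ for each $u \in V_C$, and, for each $e = (u,u') \in E_C$, a load $v_e$ adjacent only to its two endpoint-sources $v_u$ and $v_{u'}$. Set $\ka = k$, $X = \binom{k}{2}$, take the trivial defender mix $\Prd(\emptyset) = 1$, and pick $\alpha$ sufficiently large (say $\alpha = |V| + |E|$) so that no cascade can be triggered on this simple topology. Then, absent cascading, a load $v_e$ for $e = (u,u')$ is disconnected from $\Vsrc$ iff both $v_u$ and $v_{u'}$ lie in $\Va$.

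Correctness follows from a short counting argument. If the attacker splits its $k$-budget into $j$ source-attacks and $k-j$ direct load-attacks, the payoff is bounded above by $\binom{j}{2} + (k-j)$. The factored identity $\binom{k}{2} - \binom{j}{2} - (k-j) = \tfrac{1}{2}(k-j)(k+j-3)$ is strictly positive for $j < k$ whenever $k \geq 4$, so reaching $X = \binom{k}{2}$ forces $j = k$ and the attacked sources to correspond to a $k$-clique in $G_C$; the converse direction is immediate, since any such clique gives payoff exactly $\binom{k}{2}$. The main obstacle is precisely this last step---cheap direct load-attacks combined with partial cliques might otherwise threaten to trivialize the threshold---but the factored identity above rules this out cleanly.
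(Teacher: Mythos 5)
Your proof is correct, but it takes a genuinely different route from the paper's. The paper reduces from Vertex Cover by taking $G$ itself as the grid, setting $\Vsrc=\Vld=V$, $\kd=0$, $X=|V|$, and $\alpha=|E|$: an attack achieves payoff $|V|$ iff removing it isolates every remaining node, i.e.\ iff it is a vertex cover. You instead reduce from Clique via a bipartite source/load gadget (one source per vertex, one degree-two load per edge) and must then add the counting step ruling out mixed attacks: the identity ${k \choose 2}-{j \choose 2}-(k-j)=\tfrac12(k-j)(k+j-3)>0$ for $j<k$, $k\geq 4$, which correctly forces all $k$ attacks onto sources. That step is genuinely needed in your construction (the paper's sidesteps it because there every node must be covered to reach $X$), and it is right as stated; your cascade-suppression via large $\alpha$ likewise checks out since every surviving load reaches a source by a single edge. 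What your version buys: it keeps $\Vsrc$ and $\Vld$ disjoint, matching the bipartite constructions in the paper's other hardness proofs and avoiding the paper's slightly degenerate convention that a node serving as both source and load counts itself as disconnected when isolated; your bound is also robust to whether attacked load nodes are counted in the payoff. What the paper's version buys: a shorter construction (the graph is copied verbatim) with no arithmetic case analysis. Both establish NP-hardness even for $\kd=0$ and a trivial defender mix.
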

\begin{proof}
Clearly, a certificate consisting of a set $\Va \subseteq V$ can be verified in polynomial time, giving us membership in NP.  For NP-hardness consider the known NP-hard ``vertex cover'' problem~\cite{Garey79} that takes as input a graph $G'=(V',E')$ (with no self-loops) and natural number $k$ and returns ``yes'' iff there is a set of $k$ or fewer vertices that are adjacent to each edge in $E$.  We can embed vertex cover into an instance of GD-DBR in polynomial time with the following embedding: set $\ka = k$, $\kd = 0$, $\Vd = \emptyset$, $X = |V'|$, $\alpha = |E|$, $G = G'$, and $\Vsrc = \Vld = V'$.

Suppose, BWOC, the above problem instance provides a ``yes'' answer to the vertex cover problem but a ``no'' answer to GA-DBR.  Let $V''$ be a vertex cover of size $k$ or less for $G'$.  Consider the corresponding set of vertices in $G$ (we shall call this $V^*$).  Note that $|V^*|\leq \ka$.  As an attacker attacking $V^*$ disconnects those nodes from the network, all edges adjacent to $V^*$ fail.  As $V^*$ is a vertex cover for $G$, this means that there are no edges in the graph once $V^*$ is removed.  Hence, no load node is connected to any source node - giving the attacker a payoff of at least $X$ -- hence a contradiction.

Suppose, BWOC, the above problem instance provides a ``yes'' answer to GA-DBR but a ``no'' answer to the vertex cover problem.  Let $\Va$ be the set of nodes the attacker attacks in GA-DBR.  As $\alpha=|E|$ and as $\Vsrc = V$, nodes only fail in a cascade if they are either targeted by the attacker or become totally disconnected.  Further, as $X=|V|$, all nodes in $G$ are either in $\Va$ or disconnected - meaning that $\Va$ must be a vertex cover of size $\ka$ or less.  As $\ka = k$ we have a contradiction.
\end{proof}


Due to the use of covering problems for the complexity results in Theorems~\ref{mainNph}, \ref{lessNpc}, and~\ref{atkBrNph}, it may seem reasonable to frame the problem as a sub- or super- modularity optimization where the objective function is monotonic.  However, here we show (unfortunately) that these properties do not hold for either player.  First, we shall make statements regarding the monotonicity of the payoff function.

\begin{proposition}
\label{monoProp}
Iff $\forall \Vd^*$, $\Va \subseteq \Va'$: $p_G(\Va,\Vd^*) \leq p_G(\Va',\Vd^*)$ then
$\forall \Va^*$, $\Vd \subseteq \Vd'$: $p_G(\Va^*,\Vd) \geq p_G(\Va^*,\Vd')$.
\end{proposition}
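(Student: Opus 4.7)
The plan is to reduce both sides of the biconditional to a single monotonicity statement about a unary function of the removed vertex set. First I would observe that the payoff function depends on $\Va$ and $\Vd$ only through the set difference $\Va - \Vd$: writing $f(W) := \disc(\Fst((V - W, E)))$, one has $p_G(\Va, \Vd) = f(\Va - \Vd)$ directly from the definition of the payoff. The proposition then follows once I show that each of the two quantified conditions is equivalent to ``$f$ is monotone non-decreasing under set inclusion''.

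For the forward direction (assume the attacker-side condition, derive the defender-side), I would first specialize at $\Vd^* = \emptyset$, which immediately yields $f(\Va) \le f(\Va')$ whenever $\Va \subseteq \Va'$, i.e., monotonicity of $f$. Given any $\Va^*$ and $\Vd \subseteq \Vd'$, note that $\Va^* - \Vd \supseteq \Va^* - \Vd'$; applying monotonicity then gives $p_G(\Va^*, \Vd) = f(\Va^* - \Vd) \ge f(\Va^* - \Vd') = p_G(\Va^*, \Vd')$, which is the desired defender-side inequality.

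For the reverse direction I would need a slightly more careful instantiation to extract monotonicity of $f$ from the defender-side condition. Given arbitrary $W_1 \subseteq W_2$, I would apply the hypothesis with $\Va^* = W_2$, $\Vd = \emptyset$, and $\Vd' = W_2 - W_1$. Then $\Vd \subseteq \Vd'$ trivially, while $\Va^* - \Vd = W_2$ and $\Va^* - \Vd' = W_1$ (using $W_1 \subseteq W_2$), so the hypothesis reads $f(W_2) \ge f(W_1)$. Combining this monotonicity with the containment $\Va - \Vd^* \subseteq \Va' - \Vd^*$ that holds whenever $\Va \subseteq \Va'$ yields the attacker-side inequality.

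The only non-routine step is spotting the instantiation in the reverse direction; once the payoff is recognized as a function of $\Va - \Vd$ alone, no property of $\Fst$ or $\disc$ beyond well-definedness is invoked, and the equivalence becomes a purely set-theoretic manipulation. Notably, the argument never appeals to any structural feature of the cascading-failure dynamics, so the same equivalence would hold for any payoff definable through $\Va - \Vd$.
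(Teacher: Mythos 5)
Your proof is correct, and since the paper states Proposition~\ref{monoProp} without any accompanying proof, your argument supplies exactly the justification the authors leave implicit: the payoff factors as $p_G(\Va,\Vd)=f(\Va-\Vd)$ with $f(W)=\disc(\Fst((V-W,E)))$, and both quantified conditions are equivalent to monotonicity of $f$ under inclusion. The only point worth flagging is that your reverse-direction instantiation ($\Va^*=W_2$, $\Vd'=W_2-W_1$) requires the quantifiers to range over arbitrary subsets of $V$ rather than only over strategies of size at most $\ka$ or $\kd$; the proposition is stated without cardinality restrictions and $p_G$ is defined on all subsets, so this reading is the natural one and the argument stands.
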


The idea of \textit{submodularity} can be thought of as ``diminishing returns.''  Given a set of elements $S$ and a function $f : 2^S \rightarrow \Re^+$, we say a $f$ is submodular if for any sets $S_1 \subseteq S_2$ and element $s \notin S_2$, we have the following relationship:
\begin{eqnarray*}
f(S_1 \cup \{ s \})-F(S_1) \geq f(S_2 \cup \{ s \})-F(S_2)
\end{eqnarray*}

A complementary idea of supermodularity is also often studied - in this case the inequality is reversed.  Unfortunately, when we fix the strategy for the defender, the attacker strategy is neither submodular nor supermodular - making the dynamics of this model significantly different from others (i.e. \cite{tsai12}).  Let consider strategies $\Va,\Vd$ where $\Va$ causes some load node $v \notin (\Va\cup\Vd)\cap\Vld$ to disconnect and any node the strategy $\{v\}$ causes to disconnect will also become disconnected with strategy $\Va$ (such a case is easy to contrive, particularly with a bi-partite network).  Therefore, we get the following relationship:
\begin{eqnarray*}
p_G(\Va \cup \{v\},\Vd) - p_G(\Va,\Vd) < p_G(\{v\},\Vd) - p_G(\emptyset,\Vd)
\end{eqnarray*}
This arises from the fact that the left-hand side of the above equation becomes zero and the right hand side of the equation is equal to $p_G(\{v\},\Vd)$ which must be at least one.  Now consider another example.  Suppose we have a simple V-shaped network of three nodes.  The angle of the V is a load node, while the other two nodes are source nodes.  With $\alpha=1$, the load node receives power if at least one of the source nodes is connected to it.  However, it does not require both.  Let $\Va$ be a strategy consisting of one source node and $v$ be the other source node, and $\Vd$ consist of the load node.  From this, we have the following relationship:
\begin{eqnarray*}
p_G(\Va \cup \{v\},\Vd) - p_G(\Va,\Vd) > p_G(\{v\},\Vd) - p_G(\emptyset,\Vd)
\end{eqnarray*}
In this case, the right-hand side becomes zero while the left hand side becomes one.  This leads us to the following fact:

\begin{fact}
\label{nonSmFact}
When $\Vd$ is fixed, $p_G$ is neither submodular nor supermodular.
\end{fact}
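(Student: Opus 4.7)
The plan is to formalize the two counterexamples sketched in the paragraphs immediately preceding the statement, since each witnesses the failure of exactly one of the two modularity inequalities for $p_G$ with $\Vd$ held fixed.

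For the failure of supermodularity, I would make precise the first informal construction: work with a small instance (for example a short path or a two-source/one-load bipartite gadget) and choose a capacity margin $\alpha$ small enough that some attack set $\Va$ drives $\Fst$ to disconnect a load node $v \in \Vld \setminus (\Va \cup \Vd)$ purely by cascading, and moreover so that every load node that $\{v\}$ alone disconnects is already disconnected under $\Va$. Evaluating $p_G$ at $\emptyset$, $\{v\}$, $\Va$, and $\Va \cup \{v\}$, the containment hypothesis gives $p_G(\Va \cup \{v\},\Vd) = p_G(\Va,\Vd)$ (no new load becomes disconnected when $v$ is piled on top of $\Va$), while $p_G(\{v\},\Vd) - p_G(\emptyset,\Vd) \geq 1$ because removing the load $v$ itself contributes to $\disc$. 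This yields a strict $<$ in the supermodularity inequality taken with $S_1 = \emptyset$, $S_2 = \Va$, new element $v$.

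For the failure of submodularity, I would use the V-shaped three-node instance with sources $s_1,s_2 \in \Vsrc$ both adjacent only to a single load $\ell \in \Vld$. A direct application of the definition of $\sp$ gives edge load $1/2$ on each of the two edges in $G_0$, so picking $\alpha \geq 1$ guarantees that after either single-source attack the surviving edge carries load $1$, within its original capacity $(1+\alpha)/2$, and no cascade is triggered. Taking $\Vd = \{\ell\}$, $\Va = \{s_1\}$, $v = s_2$, the four payoffs work out to $p_G(\emptyset,\Vd) = p_G(\{s_1\},\Vd) = p_G(\{s_2\},\Vd) = 0$ and $p_G(\{s_1,s_2\},\Vd) = 1$, so the marginal gain from adding $s_2$ to $\{s_1\}$ is $1$ while the marginal gain from adding it to $\emptyset$ is $0$, strictly violating the submodularity inequality.

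The main obstacle is on the supermodular side: I must exhibit an explicit graph and capacity margin actually making the containment hypothesis hold, i.e., the cascade initiated by removing $v$ alone is dominated by the cascade initiated by $\Va$. The cleanest route is probably to reuse the V-shaped gadget above but with $\alpha<1$ so that attacking either single source alone already cascades out the load, and then let $v$ be the load itself; the only delicate step is then bookkeeping the convention that a removed node of $\Vld$ still counts in $\disc$. Once the two explicit gadgets are verified, the two strict inequalities together establish the fact.
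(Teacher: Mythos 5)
Your proposal is correct and follows essentially the same route as the paper, whose entire argument for this fact is the pair of examples in the preceding paragraphs that you formalize: a ``dominated cascade'' scenario refuting supermodularity and the V-shaped two-source/one-load gadget with $\alpha \geq 1$ refuting submodularity. Your version is if anything more careful, since you instantiate the first scenario with an explicit gadget and verify the edge loads and the convention that an attacked load node still counts toward $\disc$, whereas the paper leaves that case as a schematic condition on $\Va$ and $v$.
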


Now let us consider when we fix the attacker's strategy.  If the payoff is submodular when the attacker's strategy is fixed, then we have the following for $\Vd \subseteq \Vd'$ and $v \notin \Vd'$ if the payoff subtracted from the number of nodes is submodular:
\begin{eqnarray*}
p_G(\Va, \Vd' \cup \{v\})-p_G(\Va, \Vd') \geq p_G(\Va, \Vd \cup \{v\})-p_G(\Va, \Vd)
\end{eqnarray*}
This is equivalent to the following:
\begin{eqnarray*}
&p_G(\Va-(\Vd' \cup \{v\}),\emptyset)-p_G(\Va-\Vd',\emptyset) \geq &\\
&\,\,\,\,\,\,\,\,\,\,\,\,\,\,\,\,\,\,\,\,\,\,\,\,\,\,\,p_G(\Va- (\Vd \cup \{v\}),\emptyset)-p_G(\Va- \Vd,\emptyset)&
\end{eqnarray*}
Now let $\Va' = \Va-(\Vd' \cup \{v\})$ and $\Va'' = \Va' \cup (\Vd'-\Vd)$.  Clearly $\Va'' \supseteq \Va'$ and $v \notin \Va''$.  Now we get the following:
\begin{eqnarray*}
p_G(\Va',\emptyset)-p_G(\Va'\cup\{v\},\emptyset) &\geq& p_G(\Va'',\emptyset)-p_G(\Va''\cup \{v\},\emptyset)\\
p_G(\Va'\cup\{v\},\emptyset) -p_G(\Va',\emptyset)&\leq& p_G(\Va''\cup \{v\},\emptyset)-p_G(\Va'',\emptyset)
\end{eqnarray*}
Hence, submodualrity of the payoff function when the attacker's strategy is fixed would give us supermodualrity of the payoff function when the defender's strategy is fixed at the empty set.  However, this clearly violates Fact~\ref{nonSmFact} and gives rise to the following:

\begin{fact}
\label{nonSmFact2}
When $\Va$ is fixed, $p_G$ is neither submodular nor supermodular.
\end{fact}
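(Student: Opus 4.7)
The plan is to leverage Fact~\ref{nonSmFact} via the trivial identity $p_G(\Va,\Vd) = p_G(\Va-\Vd,\emptyset)$, which follows immediately from the payoff definition (only the effective attack set $\Va-\Vd$ is ever removed from $G$, so defended nodes are indistinguishable from absent ones). This identity translates any submodularity or supermodularity inequality for the fixed-attacker family $\Vd \mapsto p_G(\Va,\Vd)$ into one written entirely in terms of values of the fixed-defender family $\Va \mapsto p_G(\Va,\emptyset)$, which Fact~\ref{nonSmFact} (applied at $\Vd=\emptyset$) already constrains.

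Concretely, I would proceed by contradiction in each direction. Assume first that $p_G(\Va,\cdot)$ is submodular for every fixed $\Va$. For any $\Vd \subseteq \Vd'$ and $v \notin \Vd'$, substituting the identity into the submodularity inequality yields
\[
p_G(\Va-(\Vd'\cup\{v\}),\emptyset) - p_G(\Va-\Vd',\emptyset) \;\geq\; p_G(\Va-(\Vd\cup\{v\}),\emptyset) - p_G(\Va-\Vd,\emptyset).
\]
Setting $\Va' = \Va-(\Vd'\cup\{v\})$ and $\Va'' = \Va' \cup (\Vd'-\Vd)$, this rearranges to a supermodularity inequality for $p_G(\cdot,\emptyset)$ at the pair $\Va' \subseteq \Va''$ with $v \notin \Va''$. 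The supermodularity direction is completely symmetric: assuming $p_G(\Va,\cdot)$ is supermodular reverses the inequality throughout and instead yields submodularity of $p_G(\cdot,\emptyset)$. Both conclusions contradict Fact~\ref{nonSmFact}.

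The main obstacle is showing that the transformation $(\Va,\Vd,\Vd',v) \mapsto (\Va',\Va'',v)$ is surjective onto the admissible triples, so that the contradiction is not vacuous: if the transformation missed some $(\Va',\Va'',v)$ then the derived sub-/super-modularity inequality on $p_G(\cdot,\emptyset)$ would hold only on a proper sub-lattice and would not actually refute Fact~\ref{nonSmFact}. This is a purely combinatorial check: given any $\Va' \subseteq \Va'' \subseteq V$ with $v \in V-\Va''$, I would take $\Va = \Va''\cup\{v\}$, $\Vd = \emptyset$, and $\Vd' = \Va''-\Va'$. A short set-theoretic calculation, using $v \notin \Va''$ and $\Vd \subseteq \Vd'$, confirms $\Va-(\Vd'\cup\{v\}) = \Va'$ and $\Va' \cup (\Vd'-\Vd) = \Va''$, so every witness pair for Fact~\ref{nonSmFact} lifts to a witness against fixed-attacker sub-/super-modularity, completing the argument in both directions.
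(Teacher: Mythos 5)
Your proposal is correct and follows essentially the same route as the paper: the identity $p_G(\Va,\Vd)=p_G(\Va-\Vd,\emptyset)$, the substitution $\Va'=\Va-(\Vd'\cup\{v\})$, $\Va''=\Va'\cup(\Vd'-\Vd)$, and the resulting contradiction with Fact~\ref{nonSmFact}. Your explicit surjectivity check (taking $\Va=\Va''\cup\{v\}$, $\Vd=\emptyset$, $\Vd'=\Va''-\Va'$) is a welcome tightening of a step the paper leaves implicit, but it does not change the argument.
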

\section{Algorithms}
\label{analySec}
In this section, we present heuristic algorithms for finding the deterministic best response of each player as the results of the previous section generally preclude a polynomial time algorithm for an exact solution.  We first introduce a version of a ``high load'' strategy for the defender based on the ideas of \cite{crucitti04}.  Then we introduce a greedy heuristic for each player.  This is followed by our approach for finding mixed strategies based on the double-oracle algorithm of \cite{mcmahan03}.\\

\noindent\textbf{Hi-Load Node Approach.}  In \cite{crucitti04}, the authors study ``high load'' nodes: nodes through which the greatest number of shortest paths pass.  They show that attacks on these nodes tend to initiate cascading failures -- suggesting that they should be a priority for defense.  We formalize the definition of nodal load in our framework (essentially an extended definition of node betweenness~\cite{wasserman1994social}) by extending our function $\sp$ for nodes as follows.

\begin{definition}[Nodal Load]
For a given node, the \textbf{nodal load} is defined as the sum of the fraction of shortest paths for each pair that pass through that node.  Formally:
\begin{equation*}
\sp(i) = \sum_{s \in \Vsrc, t \in \Vld}\frac{\sigma_G(s,t | i)}{\sigma_G(s,t)},
\end{equation*}
where $\sigma_G(s,t | i)$ is the number of shortest paths between $s,t \in V$ that pass through node $i$.
\end{definition}

Hence, we shall refer to the \textit{Deterministic Load-Based} or DLB strategy for the defender as one in which he deterministically protects the $\kd$ nodes with the greatest load.  We note that this is not necessarily a ``best response'' but the intuition is that defense will occur at nodes that are perceived to be critical to the adversary.  This intuition is similar to that of the ``most vital arc'' idea seen in other failure model games~\cite{Alderson:2013-03-01T00:00:00:1082-5983:21,salmeron_04}.\\

\noindent\textbf{Greedy Heuristics for Finding Deterministic Strategies.}  Here we present a simple greedy heuristic to find the defender's best-response (\textsf{GREEDY\_DEFENDER\_RESP}).  The analogous heuristic for the attacker is not shown due to space constraints, but we shall refer to it as\\ \textsf{GREEDY\_ATTACKER\_RESP}.  We note that while we do not make general approximation guarantees (due to the results in Section~\ref{negRes}), we note that by Proposition~\ref{monoProp}, that nodes added in step~\ref{addNode} will always cause an increase in payoff to the defender (and in the analogous greedy approach for the attacker, this holds true as well).  Further, by Proposition~\ref{ptime1}, when $\ka=1$, we can be sure that \textsf{GREEDY\_DEFENDER\_RESP} returns an exact solution, even when the attacker has a mixed strategy.  Unfortunately, by Theorem~\ref{atkBrNph}, the same cannot be said if the greedy heuristic is used for the attacker's best response.\\

\algsetup{indent=1em}
	\begin{algorithm}[!h]
		\caption{ \textsf{GREEDY\_DEFENDER\_RESP}}
		\begin{algorithmic}[1]
\begin{small}
		\REQUIRE Mixed strategy $\Pra$, Natural number $\kd$
		\ENSURE Set of nodes $ \Vd $
		\medskip
		\STATE{ $\Vd = \emptyset$}
		\STATE{ Let $ATK$ be the set of strategies associated with $\Pra$}
		\STATE{ Set $flag = \textsf{True}$, $p^* = -\infty$}
		\WHILE{ $|\Vd| \leq \kd$ and $flag$ and $p^* <0$}
			\STATE{ $p^* = -\sum_{\Va \in ATK}\Prd(\Va)p_G(\Va,\Vd)$}
			\STATE{ $curBest = null$, $curBestScore=0$, $haveValidScore=\textsf{False}$ }
			\FOR{ $i \in V-\Vd$ }
				\STATE{ $curScore = p^*-\sum_{\Va \in ATK}\Prd(\Va)p_G(\Va,\Vd\cup\{i\})$}
				\IF{$curScore \geq curBestScore$}
					\STATE{ $curBest = i$}
					\STATE{ $curBestScore = curScore$}
					\STATE{ $haveValidScore=\textsf{True}$}
				\ENDIF
			\ENDFOR
			\IF{ $haveValidScore = \textsf{False}$}
				\STATE{ $flag=\textsf{False}$}
			\ELSE
				\STATE\label{addNode}{ $\Vd = \Vd \cup \{curBest \}$}
			\ENDIF
		\ENDWHILE
		\RETURN{ $ \Vd $}.
\end{small}
	\end{algorithmic}
\end{algorithm}

\noindent\textbf{Finding Mixed Strategies.}  If the attacker uses a mixed strategy that consists of uniformly attacking elements of $\{S\subset V_{ld}: |S|=k_a\}$ then the best any pure defender strategy can do is defending $V_{d} \subset V_{ld}$. The attacker's strategy implies that any node in $V_{ld}$ is attacked with probability $\frac{ k_a }{ |V_{ld}| }$. Each of the $|V_{ld}|-k_a$ remaining nodes in $V_{ld}$ is then disconnected with probability $\frac{ k_a }{ |V_{ld}| }$, i.e., $x \geq k_a(1-\frac{k_d}{|V_{ld}|})$. Clearly due to the cascading the value of the game will probably be higher, illustrating the disadvantage the defender has in this game. To determine both player's optimal strategies and the value of the game we resort to an algorithmic approach. We find the defender's optimal strategy with the following linear program.  We can find minimax strategy for the defender with the following linear program.  It simply assigns a probability to each of the defenders strategies in a manner that minimizes the maximum payoff for the adversary.  As a consequence, the solution to the following linear program, \textsf{DEF\_LP} can provide the mixed minimax strategy for the defender.  An analogous linear program, \textsf{ATK\_LP} (not shown), which mirrors \textsf{DEF\_LP}, will provide that result for the attacker.

\begin{definition}[\textsf{DEF\_LP}]
\begin{small}
\begin{eqnarray}
&\min p^*&\\
subj. to& p^* \geq \sum_{\Vd \in DEF}X_{\Vd}p_G(\Va,\Vd) & \forall \Va \in ATK\\
& 1=\sum_{\Vd \in DEF}X_{\Vd}&\\
& X_{\Vd}\in [0,1]&\forall \Vd \in DEF
\end{eqnarray}
\end{small}
\end{definition}

Note that the above linear program requires one variable for each of the defender's strategies and one constraint for each of the attacker's strategies.  However, as there are a combinatorial number of strategies, even writing down such a linear program is not practical except for very small problem instances.  To address this issue of intractability, we employ the double-oracle framework for zero-sum games introduced in \cite{mcmahan03} and has been applied in more recent work as well~\cite{DBLP:conf/atal/BosanskyKLCP13,jain13a}.  We present the algorithm \textsf{DOUBLE\_ORACLE} as follows:

\algsetup{indent=1em}
	\begin{algorithm}[!h]
		\caption{ \textsf{DOUBLE\_ORACLE}}
		\begin{algorithmic}[1]
\begin{small}
		\REQUIRE Network $G=(V,E)$, natural number $maxIters$
		\ENSURE Mixed defender strategy $\Prd$
		\medskip
		\STATE Initialize $numIters = 0$, $flag = \textsf{True}$
		\STATE{Initialize the sets of strategies $ATK,DEF$ to both be $\{ \emptyset \}$ }
		\WHILE{$flag$ and $numIters \leq maxIters$}
			\STATE\label{lpStep}{ Create $\Pra,\Prd$ based on the solutions to $\textsf{ATK\_LP}$ and $\textsf{DEF\_LP}$ respectively.}
			\STATE\label{respStep}{\textbf{IF} $numIters<maxIters$ \textbf{THEN} let $\Va$ be the attacker's best response to $\Prd$ and $\Vd$ be the defender's best response to $\Pra$}
			\STATE{\textbf{IF} $\Va \in ATK$ and $\Vd \in DEF$ \textbf{THEN} $flag = \textsf{False}$ \textbf{ELSE} $ATK=ATK\cup\{\Va\}$, $DEF=DEF\cup\{\Vd\}$}
			\STATE{ $numIters += 1$}
		\ENDWHILE
		\RETURN{ $ \Pra $}.
\end{small}
	\end{algorithmic}
\end{algorithm}

The intuition behind the above algorithm is that it iteratively creates mixed strategies for both the attacker and defender based on a solution to a linear program over the sets of current possible strategies for both players ($ATK,DEF$).  This is followed by finding (for each player) the best deterministic response to it's opponent's strategy.  If these new strategies are both already in the set of possible strategies for the respective players, the algorithm terminates.  Otherwise, they are added to $ATK,DEF$ respectively.  We note that by Theorem 1 of \cite{mcmahan03} that the above algorithm will guarantee an exact solution if $maxIters$ is set to the number of possible strategies.  In practice, \cite{mcmahan03} demonstrates that the algorithm converges much faster.

In \textsf{DOUBLE\_ORACLE}, the finding the solutions to \textsf{DEF\_LP}, \textsf{ATK\_LP} will be tractable provided that the algorithm converges in a polynomial number of steps (either through convergence or after the specified $maxIters$).  However, as we have shown, computing the best responses is usually computationally difficult.  Although, we note in the case where $\ka=1$, that by Proposition~\ref{ptime1} and Fact~\ref{easyFact}, the double oracle algorithm will return an optimal solution, even if greedy approximations are used for the oracles (provided it runs until convergence).
\section{Experimental Evaluation}
\label{expSec}

All experiments were run on a computer equipped with an Intel X5677 Xeon Processor operating at 3.46 GHz with a 12 MB Cache and 288 GB of physical memory.  The machine was running Red Hat Enterprise Linux version 6.1.  Only one core was used for experiments.  All algorithms were coded using Python 2.7 and leveraged the NetworkX library\footnote{http://networkx.lanl.gov/} as well as the PuLP library for linear programming\footnote{http://pythonhosted.org/PuLP/}. All statistics presented in this section were calculated using the R statistics software.

In our experiments, we utilized a dataset of an Italian 380 kV power transmission grid~\cite{rosato08}.  This power grid network consisted of $310$ nodes of which $113$ were source, $96$ were load, and the remainder were transmission nodes.  The nodes were connected with $361$ edges representing the power lines.

In our initial experiments, we examined the properties of the model when no defense is employed.  In Figure~\ref{bet_vs_payoff} (left) we show results concerning nodal load vs. the payoff achieved by the adversary if that node is attacked (and no others).  Interestingly, we noticed a significant number of nodes with low nodal load yet high-payoff if attacked (see nodes in dashed box).  This may suggest that the DLB strategy may be insufficient in some cases.  Later we see how DLB fails to provide adequate in a defense against the attacker best response to DLB.  This is likely due to these hi-payoff, low-load nodes.  In Figure~\ref{bet_vs_payoff} (right) we examine $\alpha$ (capacity margin) vs. attacker payoff for various settings of $\ka$ (using the \textsf{GREEDY\_ATTACKER\_RESP} heuristic).  Here we found that, in general, payoff decreases linearly with capacity margin ($R^2 \geq 0.84$ for each trial).

\begin{figure}
    \begin{center}
        \includegraphics[width=.85\linewidth]{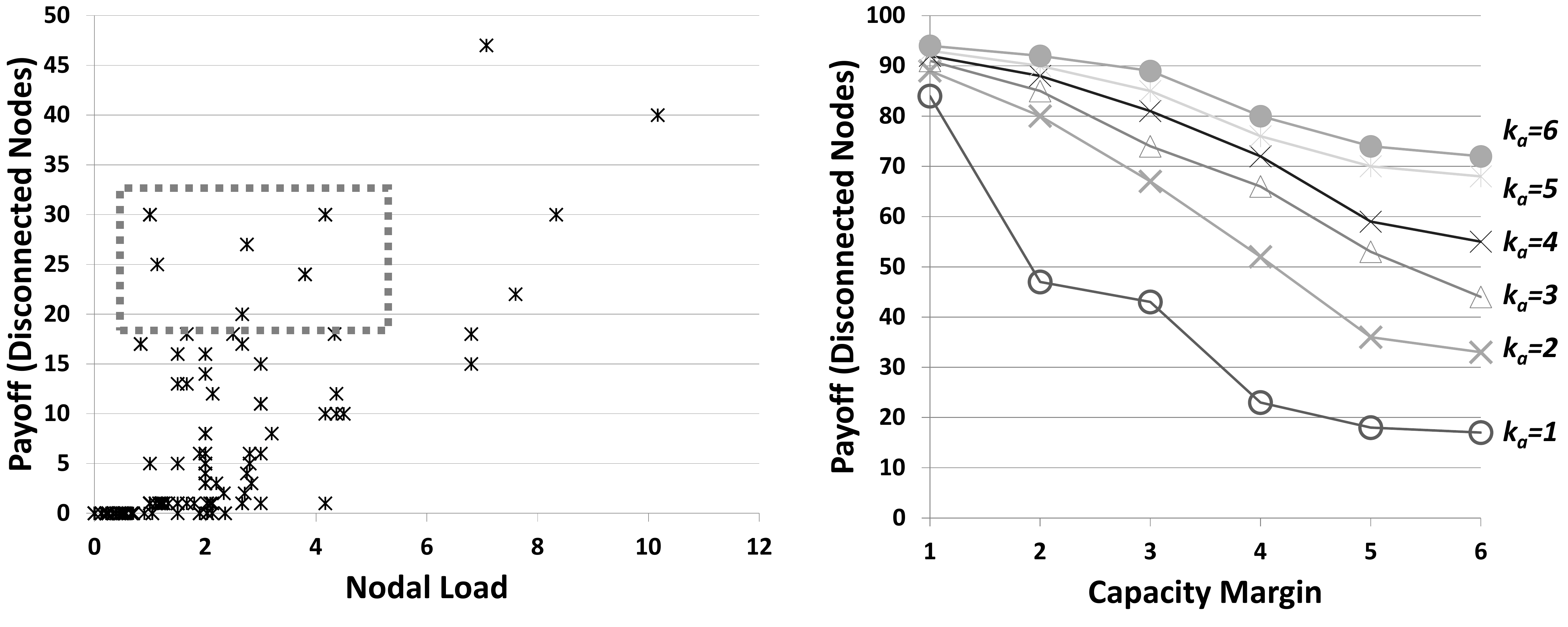}
    \end{center}
    \caption{\textbf{Left:} Nodal load vs. payoff (note hi-payoff, low-load nodes in the dashed box), \textbf{Right:} Capacity margin ($\alpha$) vs. payoff}
    \label{bet_vs_payoff}
\end{figure}

\begin{figure}[!h]
    \begin{center}
        \includegraphics[width=.85\linewidth]{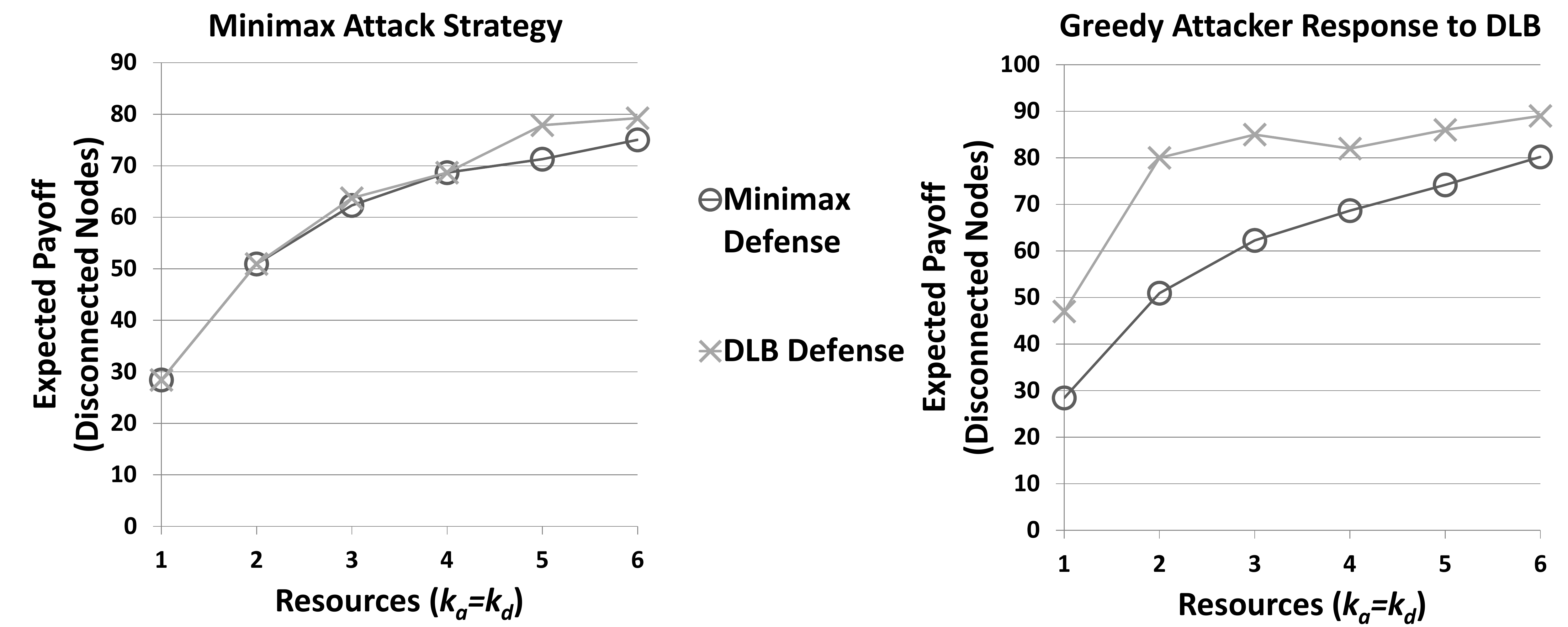}
				\includegraphics[width=.85\linewidth]{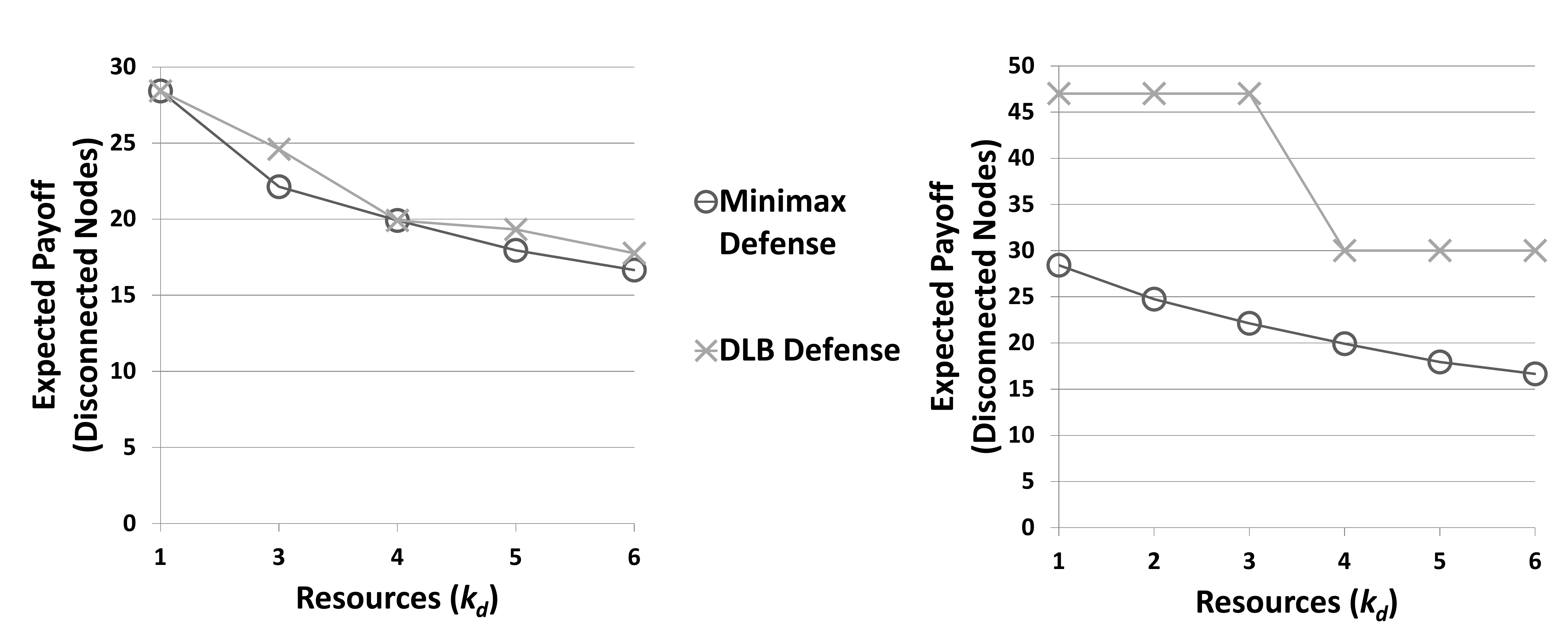}
				\includegraphics[width=.85\linewidth]{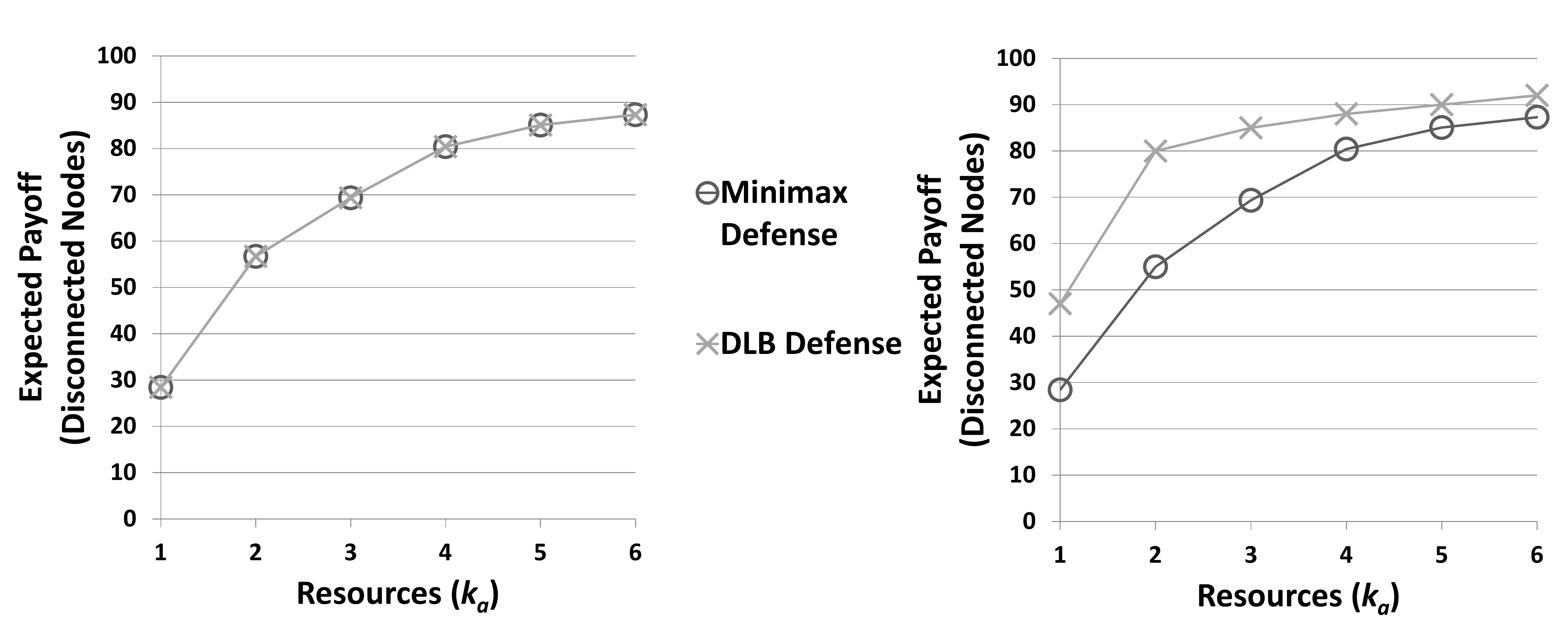}
				\end{center}
    \caption{Minimax and DLB defense strategies vs. minimax attack strategy (left) and the attacker's greedy best response to DLB (right).  Examined are the cases where $\ka=\kd$ (top), $\ka=1$, $\kd$ varies (middle) and $\kd=1$, $\ka$ varies (bottom).}
    \label{ka1fig}
\end{figure}

Next, we examined the relative performance of the minimax (mixed) defense strategy and the DLB strategy under different resource constraints and against the minimax (mixed) attack strategy as well as the attacker's (deterministic) greedy response to the DLB defense.  In these experiments, we considered the case where both players have equal resources, the attacker has one resource (which by Proposition~\ref{ptime1} and Fact~\ref{easyFact} we are guaranteed an optimal solution), and the defender has one resource.  These results are displayed in Figure~\ref{ka1fig}.  In these trials we set the capacity margin $\alpha=0.5$, meaning that all edges had an excess capacity of $50\%$.  We did not use the $maxIters$ parameter of the \textsf{DOUBLE\_ORACLE} algorithm, but instead allowed it to run until convergence.

With regard to the comparison between DLB and minimax defense, both performed comparably against the minimax attack strategy.  In fact, an analysis of variance (ANOVA) indicated little variance between the two when faced with the minimax attacker ($p \geq 0.74$ for these trials).  Yet, a defender known to be playing a single strategy would likely not face an attacker who plays the minimax strategy, but rather the best response to the DLB.  In this case, DLB play resulted in significantly greater payoff to the attacker than the defender ($p \leq 0.29$ for these trials, the DLB defense results in $15.6$ more disconnected nodes on average).  This failure of the DLB strategy to perform well against a deterministic attacker best response is likely due to  the presence of low-load yet high-payoff nodes as shown in Figure~\ref{bet_vs_payoff}.

We also noticed that an increase in resources seems to favor the attacker more than the defender.  When both players played their respective minimax strategy, the expected payoff for the attacker increased monotonically with the cardinality of the strategies.  Further, when $\kd=1$ and $\ka$ was greater, the attacker's payoff tripled when his resources increased from $1$ to $6$.  However, when $\ka=1$ and $\kd$ was greater, the defender's payoff only increased by a factor of $1.7$.  Hence, the attacker can cause more damage than the defender can mitigate with the same amount of extra resources.  We suspect that this is likely because a defended node can still fail during a cascade - which would likely be the case if the attack and defense operations are restricted to cyber-space, where physical system failure may still be possible as the result of a cascade initiated by virtual means.

We also examined the run-time of our approach, as displayed in Figure~\ref{ds_fig} (left).  Though run-time did seem to scale linearly with strategy size ($R^2 = 0.90 \pm 0.2$ for each experiment), it appears that run-time will in general prohibit the study of larger strategies or networks (our longest experiment ran for 12 days).  In examining the iterations of the \textsf{DOUBLE\_ORACLE} algorithm, Figure~\ref{ds_fig} (left), we find that run-time of an iteration of the algorithm progressively increases (note that this figure is showing the run-time for each iteration, not a cumulative time).  This increase is likely the combined result of the growing linear program and the growing size of the mixed strategies considered by the greedy approximation sub-routines.  We are currently exploring reliable methods to limit the number of iterations while maintaining defender payoff.

\begin{figure}
    \begin{center}
        \includegraphics[width=.85\linewidth]{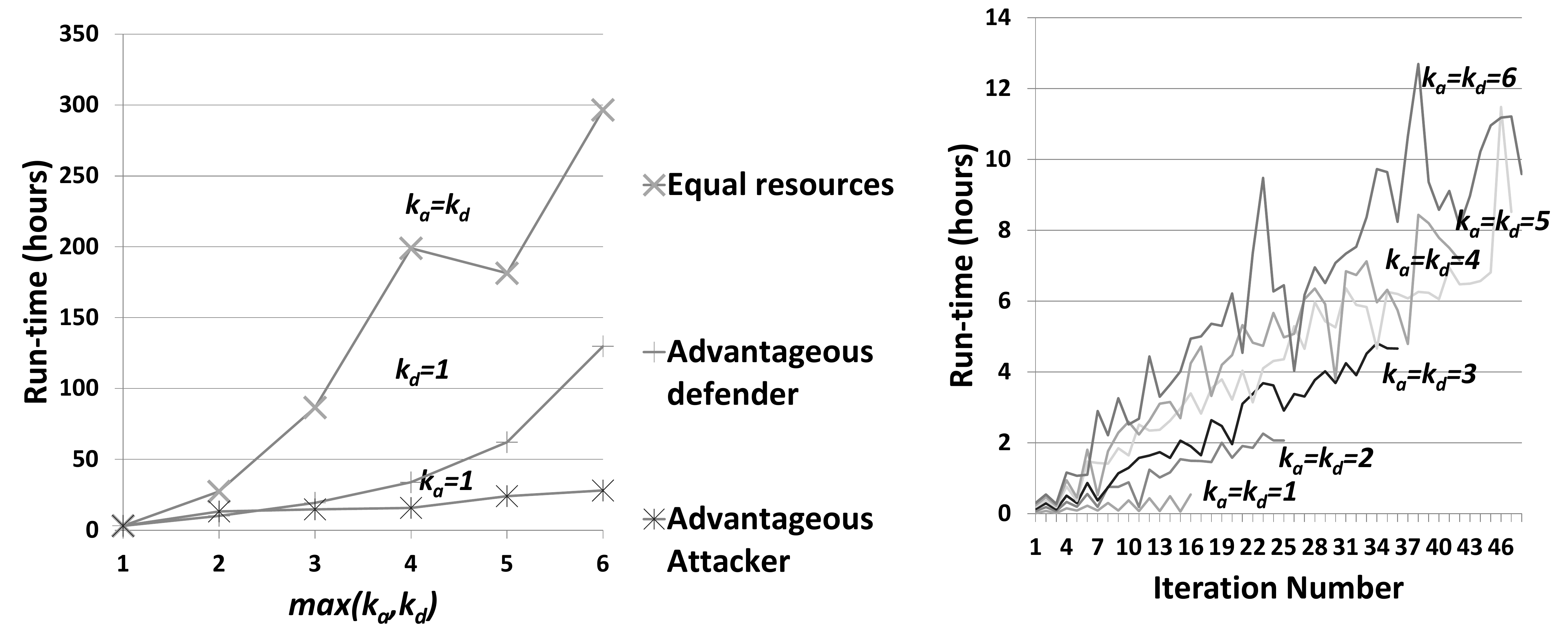}
    \end{center}
    \caption{Strategy size vs. run-time in hours (left) and the run-time of each iteration for the experiments where $\ka=\kd$}
    \label{ds_fig}
\end{figure}

\section{Related Work}
\label{rwSec}
Network security has received much attention from the research community in the past two decades.  Recent incidents have shown that due to their internet connectedness such networks can come under cyber attack, causing severe problems\footnote{http://www.wired.com/threatlevel/2009/10/smartgrid/}.  See \cite{wei} for a discussion of cyber-security issues relevant to smart grid grids.
The utilization of game theory in designing defense solutions seems ubiquitous. For instance \cite{Liu:2005:IMI:1053283.1053288} model the interaction between a DDoS attacker and the network administrator while \cite{Liu:2006:BGA:1190195.1190198} considers a game theoretic formulation for intrusion detection. Other formulations consist include stochastic games~\cite{conf/icc/NguyenAB09}, signaling games~\cite{1437828}, allocation games~\cite{bloem06intrusion} and repeated games~\cite{1430267}. Game theory is also being used in monitoring and decision making in smart grids, see for instance \cite{DBLP:journals/tsg/EsmalifalakSHS13} or the survey by Fadlullah et al.~\cite{6096962}.   However to date no game theoretic approach has been given for the specific problem where the attacker explicitly sets of a cascading power failure to maximize the damage to the defender.

Cascading failure models applied to power grid infrastructure have been studied in the past~\cite{buld10,crucitti04,motter02}.  The model of \cite{crucitti04} introduces the idea of edge failure based on excessive loads.  The goal of the research presented in these papers was to illustrate properties of the cascade, rather than explore strategies for attack and defense as this work does.  There has been work on attack and defense of a power-grid network under the DC power-flow mode~\cite{Alderson:2013-03-01T00:00:00:1082-5983:21,salmeron_04,rosato08,Brown:2006:DCI:1235123.1235128}.  However, the DC power flow model is not designed to model the more rapid cascading failures (i.e. the 2003 cascading failure in the eastern United States~\cite{ohFailure}).

The application of game theory to security situations was made popular by \cite{Paruchuri:2008:PGS:1402298.1402348} where it used for airport security patrol scheduling.  Since then, other applications have emerged including port protection~\cite{Shieh:2012:PDG:2343576.2343578}, finding weapons caches~\cite{Shakarian:2012:AGA:2089094.2089110}, and security checkpoint placement~\cite{jain13a}.  One that bears similarity to this work is \cite{tsai12} - studying games for controlling contagions on a network.  However, as previously discussed, that model operates under very different dynamics.

\pagebreak
\section{Conclusion}
In this paper, we explored complexity, algorithmic, and implementation issues in a two-player security game where the attacker/defender look to create/mitigate cascading failure on a power grid.  Future work includes an examination of scalability issues (larger networks and strategies), adding uncertainty to the model, and the consideration of more real-world information about the power grid network (i.e. actual line capacities, etc.) in order to create a richer model.
\break
\section{Acknowledgments}
We would like to thank D. Alderson for his input on related work and V. Rosato for providing us the power grid dataset.  Some of the authors are supported by ARO project 2GDATXR042.
The opinions in this paper are those of the authors and do not necessarily reflect the opinions of the funders, the U.S. Military Academy, or the U.S. Army.

\bibliographystyle{abbrv}
\bibliography{network}  

\end{document}